\newtheorem{theorem}{Theorem}%
\newtheorem{lemma}{Lemma}%
\newtheorem{claim}{Claim}%
\let\citeptemp\citep
\long\def\citep@[#1][#2]#3{%
	\ifthenelse{\equal{#1}{}}{%
		\ifthenelse{\equal{#2}{}}{%
			\citeptemp[][]{#3}}{%
			\citeptemp[][#2]{#3}}}{%
		\ifthenelse{\equal{#2}{}}{%
			(#1\citeptemp[][]{#3})}{%
			(#1\citeptemp[][]{#3}, #2)}}}
\renewcommand{\citep}{\optparams{\citep@}{[][]}}
\def\newblock{\hskip .11em plus .33em minus .07em}
\renewcommand{\thanks}[1]{\titlenote{#1}}
 \algnewcommand\algorithmicreturn{\textbf{return }}
 \algnewcommand\RETURN{\State \algorithmicreturn}%
	\newtheorem{remark}[theorem]{Remark}
	\newlength{\wordlength}
	\newcommand{\wordbox}[3][c]{\settowidth{\wordlength}{#3}\makebox[\wordlength][#1]{#2}}
	\newcommand{\nbh}[1][]{
		\ifthenelse{\equal{#1}{}}{\nu}{\nu(#1)}
	}
	\newcommand{\cstr}[1][]{
		\ifthenelse{\equal{#1}{}}{\pi}{\cstr(#1)}
	}
\newcommand\eat[1]{}
\begin{document}
\toappear{}

\sloppy

\title{A Discrete and Bounded Envy-Free Cake Cutting Protocol for Four Agents}

\numberofauthors{2}
\author{
Haris Aziz \qquad Simon Mackenzie \\
       \affaddr{Data61 and UNSW}\\
       \affaddr{Sydney, Australia}\\
       \email{\{haris.aziz, simon.mackenzie\}@data61.csiro.au}
}

\maketitle

\begin{abstract}
We consider the well-studied cake cutting problem in which the goal is to identify an envy-free allocation based on a minimal number of queries from the agents.
The problem has attracted considerable attention within various branches of computer science, mathematics, and economics. Although, the elegant Selfridge-Conway envy-free protocol for three agents has been known since 1960, it has been a major open problem to obtain a bounded envy-free protocol for more than three agents. 
The problem has been termed the central open problem in cake cutting. We solve this problem by proposing a discrete and bounded envy-free protocol for four agents.   
\end{abstract}

\category{F.2}{Theory of Computation}{Analysis of Algorithms
and Problem Complexity}
\category{I.2.11}{Distributed Artificial Intelligence}{Multiagent Systems}
\category{J.4}{Computer Applications}{Social and Behavioral Sciences - Economics}

\terms{Algorithms, Theory, Economics}

\keywords{Fair Division, Elicitation Protocols, Multiagent Resource Allocation, Cake cutting}

	\section{Introduction}
	\label{intro}


	Cake cutting is a metaphor for the allocation of a heterogeneous divisible good among multiple agents with possibly different preferences over different parts of the cake.
	Its main application is fair scheduling, resource allocation, and conflict resolution~\citep{DQS12a} and hence it has been extensively studied within computer science~\citep{Proc15a} and the social sciences~\citep{Thom07a}. Since various important divisible resources such as time and land can be captured by cake cutting, the problem of fairly dividing the cake is a fundamental one within the area of fair division and multiagent resource allocation~\citep{BrTa96a,Guo15a,Proc12a,RoWe98a,Stew99a,Su99a,Thom07a}. 

	Formally speaking,  a cake is represented by an interval $[0,1]$ and each of the $n$ agents has a value function over pieces of the cake that specifies how much that agent values a particular subinterval. The main aim is to divide the cake fairly. In particular, an allocation should be envy-free so that no agent prefers to take another agent's allocation instead of his own allocation. 
	Although an envy-free allocation is guaranteed to exist even with $n-1$ cuts~\citep{Su99a}\footnote{The existence of an envy-free cake allocation can be shown via an interesting connection with Sperner's Lemma~\citep{Su99a}.}, \emph{finding} an envy-free allocation is a challenging problem which has been termed ``one of the most important open problems in 20th century mathematics'' by Garfunkel~\citep{Garf88a}.

	\paragraph{Motivation and Contribution}

Unlike allocation of indivisible items~\cite{CoGz15a}, the number of possible allocations in cake cutting is infinite. Since the valuations of agents over the subsets of the cake can be complex, it is not practiceable to elicit each agent's complete valuations function over the cake. A natural approach in cake cutting protocols to query agents about their valuations of different portions of the cake and based on these queries, propose an allocation. A cake cutting protocol is \emph{envy-free} if each agent is guaranteed an envy-free piece if he reports his real valuations.

	For the case of two agents, the problem has a well-known solution in the form of the \emph{Divide and Choose} protocol: one agent is asked to cut the cake into equally preferred pieces and the other agent is asked to choose the preferred piece. The protocol even features in the Book of Genesis (Chapter 13) where Abraham divides the land of Canaan and Lot chooses first. 
	In modern times, the protocol has been enshrined in the Convention of the Law of the Sea~\citep[Page 10, ][]{BrTa96a}. 
	For the case of three agents, an elegant and bounded protocol was independently discovered by John L. Selfridge and John H. Conway around 1960~\citep[Page 116, ][]{BrTa96a}. Since then, an efficient envy-free protocol for four or more agents has eluded mathematicians, economists, and computer scientists.


	 In 1995, Brams and Taylor~\citep{BrTa95a} made a breakthrough by presenting an envy-free protocol for \emph{any} number of agents~\citep{Hive95a}. Although the protocol is guaranteed to terminate in finite time, there is one critical drawback of the protocol: the running time or number of queries and even the number of cuts required is unbounded even for four agents. In other words, the number of queries required
	to identify an envy-free allocation can be arbitrarily large for certain valuations functions. If a protocol is not bounded, then its practicality is compromised~\citep{LiRo09a}.   
    Procaccia~\citep{Proc15a} terms unboundedness as a ``serious flaw''.
	Brams and Taylor were cognizant of their protocol's drawback and explicitly mentioned the problem of proposing a bounded envy-free protocol even for $n=4$. 
	Lindner and Rothe~\citep{LiRo09a} write that ``even for $n = 4$, the development of finite bounded envy-free cake-cutting protocols still appears to be out of reach, and a big challenge for future research.'' 
	The problem has remained open and has been highlighted in several works~\citep{BaTa95a,BrTa95a,BrTa96a,BKM05a,EdPr06a,HHA15a,KLP13a,Proc12a,Proc15a,LiRo15a,RoWe98a,SaWa09a}. 
	Saberi and Wang~\citep{SaWa09a} term the problem as ``one of the most important open problems in the field'' and Lindner and Rothe~\citep{LiRo15a} mention the case for $n=4$ as ``the central open problem in the field of cake-cutting''. 
	In this paper, \emph{we present a discrete envy-free protocol for four agents that requires a bounded number of queries as well as cuts of the cake}. The maximum number of cuts required is 203.\footnote{Most of the 203 cuts are technically trims but in the Robertson and Webb model, any marking/trim on the cake is also treated as a proper cut.} Some of the techniques we use may be useful for cake cutting protocols with other properties or for more agents.	
In particular, we propose a new technique (called \emph{permutation}) in which by suitably reallocating portions of a partial allocation that is envy-free, we ensure that some agent will not be envious of another agent \emph{even} if the unallocated cake is given to the latter agent. 



	\paragraph{Related Work}

	Cake cutting problems originated in the 1940's when famous mathematicians such as Banach, Knaster, and Steinhaus initiated serious mathematical work on the topic of fair division.\footnote{Hugo Steinhaus presented the cake cutting problems to the mathematical and social science communities on Sep. 17, 1947, at a meeting of the Econometric Society in Washington, D.C.~\citep{RoWe98a,Stei48a}.}
	Since then, the theory of cake cutting algorithms has become a full-fledged field with at least three books written on the topic~\citep{Barb05a,BrTa96a,RoWe98a}. 
	The central problem within cake cutting is finding an envy-free allocation~\citep{GaSt58a,Stew99a}.

	Since the earliest works, mathematicians have been interested in the complexity of cake cutting. Steinhaus~\citep{Stei48a} wrote that ``Interesting mathematical problems arise if we are to determine the minimal number of cuts necessary for fair division.''
	When formulating efficient cake cutting protocols, a typical goal is to minimize the number of cuts while ignoring the number of valuations queried from the agents. In principle, the actual complexity of a problem or a protocol depends on the number of queries. 
	When considering how efficient a protocol is, it is useful to have a formal query model for cake-cutting protocols. Robertson and Webb~\citep{RoWe98a} formalized a simple query model in which there are two kinds of queries: {\sc Evaluate} and {\sc Cut}. In an {\sc Evaluate} query, an agent is asked how much he values a subinterval. In a {\sc Cut} query, an agent is asked to identify an interval, with a fixed left endpoint, of a particular value. 
	Although, the query model of Robertson and Webb is very simple, it is general enough to capture all known protocols in the literature. Note that if the number of queries is bounded, it implies that the number of cuts is bounded in the Robertson and Webb model. The protocol that we present in this paper uses a bounded number of queries in the  Robertson and Webb model. Cake cutting protocols also provide an interesting connection between the literature on fair division and the field of communication complexity~\citep{KuNi06a}.

	There is not too much known about the existence of a bounded envy-free protocol for  $n\geq 4$ except that 
	any envy-free cake-cutting algorithm requires $\Omega(n^2)$ queries in the Robertson-Webb model~\citep{Proc09a,Proc15a}.
	Also,  for $n\geq 3$, there exists no finite envy-free cake-cutting algorithm that outputs \emph{contiguous} allocations~\citep{Stro08a}. 
	Brams et al.~\citep{BTZ97a} and Barbanel and Brams~\citep{BaBr04a} presented envy-free protocols for four agents that require 13 and 5 cuts respectively. However, the protocols are not only unbounded but not even finite since they are \emph{continuous} protocols that require the notion of a \emph{moving knife}.
	 An alternative approach is to consider known bounded protocols and see how well they perform in terms of envy-freeness~\citep{LiRo09a}. Apart from the unbounded Brams and Taylor envy-free protocol for $n$ agents, there are other general envy-free protocols by Robertson and Webb~\citep{RoWe97a} and Pikhurko~\citep{Pikh00a} that are also unbounded. 
 
	There are positive algorithmic results concerning envy-free cake cutting when agents have restricted valuations functions~\citep{DQS12a,Bran15b} or when some part of the cake is left unallocated~\citep{SaWa09a}.
	There has also been work on \emph{strategyproof} cake cutting protocols for restricted valuation functions~\citep{AzYe14a,CLPP12a,MaNi12a} as well as strategic aspects of protocols~\citep{BrMi15a}.


	\paragraph{Structure of the Paper}
	In Section~\ref{prel}, the formal model is presented. In Section~\ref{section:warmup}, we present an envy-free protocol for three agents that serves as a warm-up for the case of four agents.
	In Section~\ref{section:4agents}, we presents the main protocol. The section is divided into subsections in which three different protocols (\emph{Post Double Domination Protocol}, \emph{Core Protocol}, and \emph{Permutation Protocol}) are described. These three protocols are used as building blocks to formulate the overall protocol. 

	%
	%

	\section{Preliminaries}
	\label{prel}

	\paragraph{Model}
			We consider a cake which is represented by the interval $[0,1]$.
			A \emph{piece of cake} is a finite union of disjoint subsets  of $[0,1]$. 
	We will make the standard assumptions in cake cutting.
			Each agent in the set of agents $N=\{1,\ldots, n\}$ has his own valuation function over subsets of interval $[0,1]$. The valuations are 
		\begin{inparaenum}[(i)]
			\item \emph{defined on all finite unions of the intervals}; \item \emph{non-negative}: $V_i(X)\geq 0$ for all $X\subseteq [0,1]$; 
			\item \emph{additive}: for all disjoint $X,X'\subseteq [0,1]$, $V_i(X\cup X')=V_i(X)+V_i(X')$; 
			\item \emph{divisible} i.e., for every $X\subseteq [0,1]$ and $0\leq \lambda \leq 1$, there exists $X'\subseteq X$ with $V_i(X')=\lambda{V_i(X)}$.
			\end{inparaenum}
	
		We will call an allocation \emph{partial} if there is some cake that is unallocated. A partial envy-free allocation is a partial allocation that is envy-free.  
			 In order to ascertain the complexity of a protocol, Robertson and Webb presented a computational framework in which agents are allowed to make two kinds of queries: (1) for given $x\in [0,1]$ and $r\in \mathbb{R}^+$, {\sc Cut} query asks an agent to return a point $y\in [0,1]$ such that $V_i([x,y])=r$ (2) for given $x,y \in [0,1]$, {\sc Evaluate} query ask agent to return a value $r\in \mathbb{R}^+$ such that $V_i([x,y])=r$. 
A cake-cutting protocol specifies how
agents interact with queries and cuts.
A protocol is \emph{envy-free} if no agent is envious if he follows the protocol truthfully. All well-known cake cutting protocols can be analyzed in terms of number of queries required to return a fair allocation. A cake cutting protocol is \emph{bounded}  if the number of queries required to return a solution is bounded by a function of $n$ \emph{irrespective} of the valuations of the agents.


		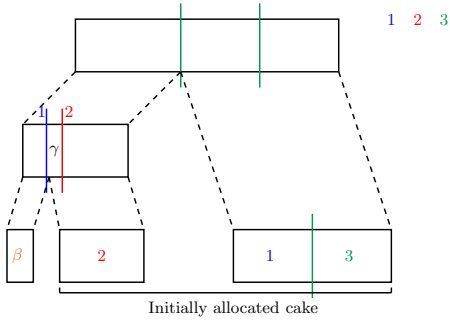
\begin{figure}
			\centering
			\scalebox{0.7}{
		\begin{tikzpicture}[thick]
		\draw (0mm, 0mm) rectangle (50mm, 10mm);
		\node[blue] at (60mm, 10mm) {1};
		\node[red] at (65mm, 10mm) {2};
		\node[ForestGreen] at (70mm, 10mm) {3};

		\draw[ForestGreen] (20mm, 13mm) -- (20mm, -3mm);
		\draw[ForestGreen] (35mm, 13mm) -- (35mm, -3mm);
	
		[fill,red] (0mm, 5mm) rectangle (19.8mm, 10mm);
		[fill,blue] (0mm, 0mm) rectangle (19.8mm, 5mm);
	\draw[dashed](0mm,0mm)--(-10mm,-10mm);
		\draw[dashed](20mm,0mm)--(10mm,-10mm);
		\draw (-10mm,-20mm) rectangle (10mm,-10mm);
	
	\draw[blue] (-5.5mm, -7mm) -- (-5.5mm, -23mm);
		\draw[red] (-2.5mm, -7mm) -- (-2.5mm, -23mm);
		\node[blue] at (-6.4mm, -7.5mm) {1};
		\node[red] at (-1.2mm, -7.5mm) {2};
			\node[] at (-4mm, -15mm) {$\gamma$};
	
	\draw[dashed](-5mm,-20mm)--(-3mm,-30mm);
		\draw[dashed](10mm,-20mm)--(13mm,-30mm);
		\draw (-3mm,-40mm) rectangle (13mm,-30mm);

	\draw[dashed](-5mm,-20mm)--(-8mm,-30mm);
		\draw[dashed](-10mm,-20mm)--(-13mm,-30mm);
		\draw (-13mm,-40mm) rectangle (-8mm,-30mm);

		\node[brown] at (-11mm, -35mm) {$\beta$};
		\node[red] at (5mm, -35mm) {2};
	
	\draw[dashed](20mm,0mm)--(30mm,-30mm);
		\draw[dashed](50mm,0mm)--(60mm,-30mm);
		\draw (30mm,-40mm) rectangle (60mm,-30mm);
		\draw[ForestGreen](45mm,-27mm)--(45mm,-43mm);

		\node[blue] at (37mm, -35mm) {1};
		\node[ForestGreen] at (52mm, -35mm) {3};

		\draw(-3mm,-42mm)--(60mm,-42mm);
		\draw((-3mm,-42mm)--((-3mm,-41mm);
		\draw((60mm,-42mm)--(60mm,-41mm);
		\node[] at(30mm,-45mm){Initially allocated cake};
	
		\end{tikzpicture}
		}
		\caption{Example of a trim. Agents $1$ and $2$ trim their most preferred piece (the left most piece) to the value equal to that of their second most preferred piece. 
	 In this instance, agent $2$ trims more than agent $1$, hence his trim is to the right of agent $1$'s trim. 
     Let us assume agent $1$ and $3$ each get a complete piece with $1$ getting his second most preferred piece.   
		Agent $2$ is not envious of other agents if he gets the right side of the trimmed piece up till his trim. 
	If agent $2$ gets the part to the right of agent $1$'s trim instead of his own trim, then he is even happier. We will refer to this extra bit $\gamma$ as the `bonus' for agent $2$.  Agent $1$ is still not envious of agent $2$ if $2$ gets the bonus.}
		\label{fig:trim}
		\end{figure}

	\paragraph{Terms and Conventions}
	We now define some terms and conventions that we will use in the paper. 
	Given a partial envy-free allocation of the cake and an unallocated residue $\beta$, we say that agent $j$ \emph{dominates} agent $i$ if $j$ does not become envious of $i$ even if all of $\beta$ were to be allocated to $i$.  This concept has been referred to as $i$'s \emph{irrevocable advantage} in the cake cutting literature~\citep{BrTa96a}. 

	In the cake cutting protocols, we will describe, an agent may be asked to trim a piece of cake so that its value equals the value of a less valuable piece.  
	 Agents will be asked to trim various pieces of the cake so their remaining value is equal to the value of the third 
(or in some cases their second) most preferred complete piece. In Figure~\ref{fig:trim}, we outline the idea of trimming a piece to equal the value of some other piece.
	 When an agent trims a piece of cake, he will trim it from the left side: the main piece (albeit trimmed) will be on the the right side. The piece minus the  trim will be called the partial main piece. The remainder will be referred to as the \emph{residue}. If an agent trims a piece, we say he is \emph{competing} for the piece. 
	 When we say an agent is \emph{guaranteed} to get his second/third/etc most favoured piece, this guarantee is based on the \emph{ordinal} preferences of the agents over the pieces. By ordinal, we mean that agents simply give a weak ordering over the pieces but do not tell the exact cardinal utility difference between two pieces. If an agent is indifferent between the top three pieces, then we will still say that the agent is guaranteed to get his third most valued piece.

	


		We introduce notation to represent which agents have trimmed which pieces. So for example $123|1|2|3$ represents the scenario where
	one piece has three trims (by agents 1,2,3) in any possible order and the other three pieces have one trim each by one of the agents 1, 2, 3. We may enrich this notation further as follows: $1_12_13_1|1|2|3$ which represents that all agents think that the piece with the three trim marks is their most preferred or equivalently highest valued piece.

			\begin{algorithm}[b!]
			  \caption{Envy-free Protocol for 3 Agents.}
			  \label{algo:3agents}
			\renewcommand{\algorithmicrequire}{\wordbox[l]{\textbf{Input}:}{\textbf{Output}:}} 
		
			 \renewcommand{\algorithmicensure}{\wordbox[l]{\textbf{Output}:}{\textbf{Output}:}}
			  \begin{algorithmic}[1] 
			  	\STATE Agent 3 divides the cake into 3 equally preferred pieces.
			  	\IF{1 and 2 can each be given a different complete most preferred piece}
				\STATE give that complete piece to the agent who prefers it the most and give remaining piece to $3$ and return.
                \ELSE 
			  	\STATE 1 and 2 trim their highest valued piece from the left side to make the right side of the trim equal to the value of second most preferred piece (they simultaneously put trim marks).
                \ENDIF
			\IF{1 and 2 trim the same piece}
	\STATE consider $\beta_1$, the remainder from the  left extreme of the piece to the leftmost trim. The partial piece ${P_1^1}'$  which is the most preferred piece except $\beta_1$ is given to the agent $i\in \{1,2\}$ who trimmed the piece more (let the other agent be $-i$). Let $\gamma_1$ be the part between the two trims of agent 1 and 2. Agent $-i$ gets his second highest valued complete piece ${P_2^1}$. Agent $3$ gets the remaining complete piece. The  unallocated cake is $\beta_1$. 
	\ENDIF
			  	\STATE 3 cuts $\beta_1$ into 3 equally preferred pieces.
				\STATE 1 and 2 trim their highest valued piece from the left side to make it equal to the value of second best.
			  	\IF{1 and 2 can be each be given a different complete most preferred piece}
				\STATE give that complete piece to the agent who prefers it the most and give remaining piece to $3$ and return.
					\ELSIF{1 and 2 trim the same piece but $-i$ trims at least as much as $i$}, 
	\STATE give $-i$ the most preferred piece up till the leftmost trim, give $i$ a complete second most preferred piece, and give $3$ the remaining complete piece.
			  	\ELSIF{1 and 2 trim the same piece but $i$ trims more again},
		\STATE let $\beta_2$ be the remainder from the left hand side to the first trim in $\beta_1$. The partial piece ${P_1^2}'$  which is the most preferred piece except $\beta_2$ is given to the agent $i$. Let $\gamma_2$ be the part between the two trims. Agent $-i$ gets his second highest valued complete piece ${P_2^2}$. Agent $3$ gets the remaining complete piece. The only unallocated cake left if any is $\beta_2$.
          \STATE Since $i$ again got a partial piece, $i$ is asked to identify the lesser preferred $\gamma_j\in \{ \gamma_1, \gamma_2\}$. Then $i$ gives ${P_1^j}'$ to agent $-i$ and gets ${P_2^j}$ in return.
        			\ENDIF
				\STATE If some cake is still unallocated, Agent $1$ and $2$ perform Divide and Choose to allocate it.
			 \end{algorithmic}
			\end{algorithm}
    
    	\section{Protocol for three agents}
	\label{section:warmup}
    
    	We warm-up by presenting a protocol (Algorithm~\ref{algo:3agents}) for $n=3$ which we will extend to $n=4$.

Although the protocol requires more cuts than the Selfridge-Conway protocol, it bears similarities with it. It also depends on some ideas that we will exploit for our main protocol for four agents. The main idea of the protocol is that in each step, the cutter (agent 3) cuts the unallocated cake into 3 equally preferred pieces and gets one of the complete pieces. In each step, a partial envy-free allocation is maintained and then the remainder is again allocated which results in the remainder being fully allocated or a smaller remainder left. 
	Note that when $i\neq 3$ is given the partial cake piece, agent $3$ dominates $i$. When the remainder $\beta_1$ in the first step is divided among the agents, if now $3$ dominates $-i$, then $3$ does not care how $\beta_1$ is divided among $i$ and $-i$ since he dominates both. So $3$ is in this sense `eliminated' from the protocol and we can perform Divide and Choose for 1 and 2 on the unallocated cake. If $3$ again dominates $i$ based on how $\beta_1$ is allocated, then we enforce a \emph{permutation} or \emph{reallocation} of some pieces of $i$ and $-i$, so that $3$ dominates $-i$ (see Figure~\ref{fig3agents-perm}). Both the ideas of domination and permutation will feature prominently in our protocol for four agents.

	\begin{figure}
		\begin{center}
			\scalebox{0.7}{
	\begin{tikzpicture}[thick]
	\draw (0mm, 0mm) rectangle (70mm, 10mm);

	\draw[brown] (30mm, 13mm) -- (30mm, -3mm);
	\node[brown] at (15mm, 5mm) {$\beta_1$};

	\draw[dashed](0mm,0mm)--(-10mm,-10mm);
	\draw[dashed](30mm,0mm)--(20mm,-10mm);
	\draw (-10mm,-20mm) rectangle (20mm,-10mm);

	\draw[dashed](30mm,0mm)--(40mm,-10mm);
	\draw[dashed](70mm,0mm)--(80mm,-10mm);
	\draw (40mm,-20mm) rectangle (80mm,-10mm);

	\draw (50mm, -10mm) -- (50mm, -20mm);
	\draw (65mm, -10mm) -- (65mm, -20mm);

	\node[ForestGreen] at (72mm, -15mm) {3};


	\draw[dotted](40mm,-15mm)--(20mm,-15mm);

	\draw[brown] (0mm, -7mm) -- (0mm, -23mm);
	\node[brown] at (-5mm, -15mm) {$\beta_2$};

	\draw[dashed](-10mm,-20mm)--(-20mm,-30mm);
	\draw[dashed](0mm,-20mm)--(-10mm,-30mm);
	\draw (-20mm,-40mm) rectangle (-10mm,-30mm);
	\draw[dashed](0mm,-20mm)--(10mm,-30mm);
	\draw[dashed](20mm,-20mm)--(30mm,-30mm);
	\draw (10mm,-40mm) rectangle (30mm,-30mm);

	\draw (16mm, -30mm) -- (16mm, -40mm);
	\draw (23mm, -30mm) -- (23mm, -40mm);

	\node[ForestGreen] at (27mm, -35mm) {3};


	\node[ForestGreen] at (72mm, -15mm) {3};
	\draw[dotted](-10mm,-35mm)--(10mm,-35mm);

	\node[red] at (45mm, -15mm) {2};
	\node[blue] at (57mm, -15mm) {1};

	\node[red] at (13mm, -35mm) {2};
	\node[blue] at (20mm, -35mm) {1};

	\end{tikzpicture}
	}
	\end{center}
	
	\caption{Protocol for 3 agent: In case agent $2$ gets a trimmed piece two times, then we need to perform a permutation so that $1$ gets a trimmed piece. }
	\label{fig3agents-perm}
	\end{figure}
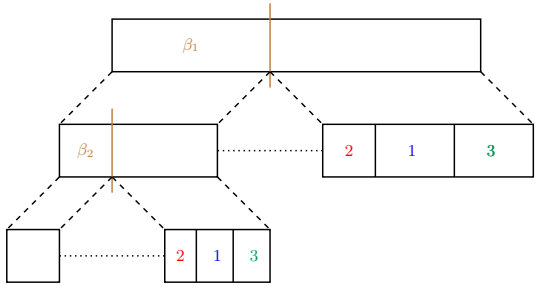

	\section{Protocol for Four Agents}
	\label{section:4agents}

	We now extend the ideas for the case of $n=3$ to $n=4$. 
	The general flavour of the protocol is similar to that of the protocol for 3 agents. A designated cutter is asked to cut into equally preferred pieces. Based on some finer steps, we are able to achieve an envy-free allocation with possibly some cake still unallocated. We repeat the process on the remaining unallocated piece with the goal that the cutter dominates other agents just as we managed in the protocol for 3 agents.
	A few additional complications are introduced when dealing with 4 agents.
	Eliminating an agent would require being able to ensure that we can make an agent dominate all 3 others.
	Thankfully this is not necessary: we can show that we only need a protocol that ensures a given agent dominates 2 others. This is proved in the Double Domination Lemma.

	%
	%
	%
	%
	%

	\subsection{Post Double Domination Protocol}

	We now present the Post Double Domination Protocol (Algorithm~\ref{algo:postDD}) that takes as input a partial envy-free allocation in which each agent dominates two agents and it returns a complete envy-free allocation.

			\begin{algorithm}[h!]
			  \caption{Post Double Domination Protocol for 4 Agents}
			  \label{algo:postDD}
			\renewcommand{\algorithmicrequire}{\wordbox[l]{\textbf{Input}:}{\textbf{Output}:}} 
		
			 \renewcommand{\algorithmicensure}{\wordbox[l]{\textbf{Output}:}{\textbf{Output}:}}
		\begin{algorithmic}
				\REQUIRE A partial envy-free allocation and unallocated cake such that each agent dominates 2 other agents
				\ENSURE Envy-free complete allocation.
			\end{algorithmic}
			  \begin{algorithmic}[1] 
			\STATE There exists some agent $1$ who dominates two other agents say $3$ and $4$.
				  \IF{$2$ also dominates $3$ and $4$}
				  \STATE  $3$ and $4$ can divide the remainder by Divide and Choose.
				  \ELSIF{$2$ does not dominate $3$ and $4$}
				  \STATE it dominates $1$ and one of $3$ and $4$ say $4$.  
				  \IF{$3$ dominates $4$}
				  \STATE give all the remainder to $4$ (since everyone dominates $4$).
				  \ELSIF{$3$ does not dominate $4$ and hence dominates $1$ and $2$}
				  \STATE then 
				  let  $4$ cut the residue into four equally preferred pieces, and agents $1,2,3$ pick their most preferred remaining piece in that order.
				  \ENDIF
				  \ENDIF
				  \RETURN Envy-free complete allocation.
			 \end{algorithmic}
			\end{algorithm}

	\begin{figure}[h!]
		\centering
		\begin{tikzpicture}[scale=0.3]
			\centering
			\tikzstyle{pfeil}=[--,draw]
			\tikzstyle{pfeil}=[->,>=angle 60, shorten >=1pt,draw]
			\tikzstyle{onlytext}=[]

			\draw
				node[circle,fill=white,draw](3){$\textcolor{ForestGreen}{3}$}
				++(-45:4)
				node[circle,fill=white,draw](4){$\textcolor{purple}{4}$}
				++(45:4)
				node[circle,fill=white,draw](2){$\textcolor{red}{2}$}
				++(135:4)
				node[circle,fill=white,draw](1){$\textcolor{blue}{1}$}
			 	++(225:4)
			;

			 \path[draw,thick,->] (1) -- (4);
			  \path[draw,thick,<->] (1) -- (3);
			  \path[draw,thick,->] (2) -- (1);
			    \path[draw,thick,->] (2) -- (4);
				 \path[draw,thick,->] (3) -- (2);

		\end{tikzpicture}

		\caption{The domination graph of the final case in the proof of the Double Domination Lemma. }
		\label{fig:dgraph}
	\end{figure}
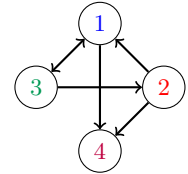
    
		\begin{lemma}[Double Domination Lemma]\label{lemma:dd}
	Suppose we have a bounded protocol which given a specified agent $i$ and an unallocated piece of cake returns a partial envy-free allocation such that $i$ dominates 2 other agents, then we can extend this into a 4 agent envy-free bounded protocol.
			\end{lemma}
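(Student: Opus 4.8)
The plan is to treat the hypothesized sub-protocol as a black box that, on demand, ``upgrades'' one designated agent into an agent who dominates two others, and then to invoke it once for each of the four agents so as to reach the precondition of the Post Double Domination Protocol (\algref{algo:postDD}). Two elementary facts drive the argument. First, \emph{additivity of envy-freeness}: combining an envy-free partial allocation of a piece $R$ with an envy-free partial allocation of a disjoint piece $R'$ yields an envy-free allocation of $R\cup R'$, since each agent's value for his own combined bundle is a sum of two terms, each of which is at least his value for the corresponding bundle of any rival. Second, \emph{residue-monotonicity of domination}: if $j$ dominates $i$ with respect to a residue $\beta$, then (with bundles fixed) $j$ still dominates $i$ with respect to any $\beta'\subseteq\beta$, as conceding a smaller residue is only less tempting.

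First I would run the given sub-protocol on the whole cake with agent $1$ designated, obtaining a partial envy-free allocation in which $1$ dominates two others and leaving residue $\beta_1$. I would then run it on $\beta_1$ with agent $2$ designated, on the resulting residue $\beta_2$ with agent $3$, and finally on $\beta_3$ with agent $4$. Because each call operates on the previous round's residue, the bundles produced across rounds are pairwise disjoint, so additivity keeps the running allocation envy-free throughout the four rounds.

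The key step is to check that after all four rounds \emph{every} agent dominates two others in the \emph{combined} allocation. For a domination of $i$ by $j$ established in an earlier round, observe that $j$'s own bundle only grows in later rounds, whereas the cake $i$ additionally receives and the shrunken residue are disjoint subsets of the old residue and hence together worth at most the old residue to $j$; substituting this into the earlier domination inequality preserves it. For a domination created in the current round $k$, where agent $k$ dominates some agent $c$ with respect to the new residue, it suffices to add agent $k$'s (weak) non-envy of $c$ in the already-envy-free allocation of the earlier rounds to the round-$k$ domination inequality; additivity then certifies domination of $c$ in the combined allocation. Hence the four rounds yield a partial envy-free allocation in which each agent dominates two others, which is exactly the input required by \algref{algo:postDD}, and that protocol converts it into a complete envy-free allocation.

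Boundedness is then immediate: the construction makes a constant number (four) of calls to a bounded sub-protocol followed by the bounded Post Double Domination Protocol, so the total query count is bounded by a function of $n$ alone, independent of the valuations. I expect the only genuine obstacle to be the bookkeeping of the previous paragraph, namely verifying that the additive accounting of values across rounds simultaneously preserves the old dominations and certifies the new ones. The envy-free and domination ledgers must be tracked together, since it is precisely the envy-freeness guaranteed in each round that supplies the slack needed to promote a single-round domination into a domination of the full combined allocation.
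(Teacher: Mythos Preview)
Your proposal is correct and follows the same overall architecture as the paper: run the hypothesized sub-protocol once per agent on the shrinking residue, then hand the resulting partial envy-free allocation to the Post Double Domination Protocol. The emphasis differs slightly: you spend your effort on the bookkeeping that dominations established in one round persist through later rounds (a point the paper takes for granted), whereas the paper spends its proof verifying, by case analysis on the domination graph, that \algref{algo:postDD} itself is correct---something you simply invoke. Both pieces are needed for a complete argument, so your write-up would benefit from at least a sentence acknowledging why \algref{algo:postDD} terminates with an envy-free allocation, but the approach is essentially the paper's.
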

			\begin{proof}
				If we have a bounded protocol in which one agent can be made to dominate two other agents, then we simply run it at most 4 times on any unallocated cake to ensure that each agent dominates two other agents. If while doing this, the cake is completely allocated, we are already done. Otherwise, we can run the Post Double Domination Protocol (Algorithm~\ref{algo:postDD}). We now argue for the correctness of the protocol.
			
				Assume $1$ dominates $3$ and $4$. Now if $2$ also dominates $3$ and $4$, then there exists a partial envy-free allocation in which even if all the residue is given to $3$ or $4$, then agent $1$ and $2$ will not be envious. 
	All the residue can be divided among $3$ and $4$ using divide and choose. Agents $1$ and $2$ don't care because they dominate $3$ and $4$.

	The other case is when $2$ does not dominate $3$ and $4$. Without loss of generality assume that $2$ dominates $1$ and $4$. Now if $3$ dominates $4$, we are already done 
	because the whole residue can be given to $4$ since everyone dominates $4$. If $3$ does not dominate $4$ but dominates $1$ and $2$, then the domination graph looks like in Figure~\ref{fig:dgraph}, an envy-free allocation can be found
	via the following method: agent $4$ cuts the residue into equally preferred four pieces, and agents $1,2,3$ pick their most preferred remaining piece in that order. Agent $2$ dominates $1$ so does not care if $1$ chooses first; agent $3$ dominates $2$ and $1$ so that he does not care if $1$ and $2$ choose before him. 		
	\end{proof}

	Since we have shown that making an agent dominate two other agents is helpful, we will now explain how to achieve it. The overall protocol will first achieve double domination for each agent and if some cake is still unallocated, it will use the Post Double Domination Protocol. In order to get an agent to dominate other agents, we will repeat a \emph{core protocol} multiple times which gives a partial envy-free allocation. The core protocol is explained in the next section.

	\subsection{Core Protocol}


	 In the core protocol, a specified agent is asked to cut the cake into equally preferred pieces. The cutter gets a complete piece whereas the other agents may get partial pieces. 
	The core protocol is recursively applied to the unallocated cake. After a bounded number of calls of the core protocol, we are in a position to do some reallocation so as to ensure that the specified cutter dominates two agents. We can then repeat this for another specified agent until each agent dominates two other agents.

	We now give a high level description of the \emph{core protocol}. Let us say that agent $4$ divides the unallocated cake into $4$ pieces. 
	Agents $1$, $2$, and $3$ are asked to trim  the  left hand side of their most preferred two pieces to make the right side of their trim equally valuable as their third most preferred cake piece. Each agent in set $\{1,2,3\}$ trims at most two pieces. In case an agent is indifferent between two or more pieces, we will still assume that the agent trims one piece to make it equal to the other piece. The trim in this extreme case is a trivial trim that coincides with the left edge of the cake. Hence the four pieces have a total of six trims.
	If an agent who trims a piece most is given that piece (an agent who trims most two pieces can choose which piece to get), up to his trim point, then the agent is envy-free. In fact the agent is not envious even if each other agent who gets a piece is given the piece up till the second rightmost trim. 
     This approach is useful to get a partial envy-free allocation with some cake unallocated. In the core protocol, we do some extra work so that apart from the cutter, at least one more agent is given a complete piece. In order to do this, in a couple of cases, we may ask one or two carefully identified agents to additionally trim their most preferred piece to the value of their second most preferred piece in which case the previous trims of these agents are ignored. This is helpful is ensuring that at least two agents get complete pieces.  
	It may be the case that some cake is left unallocated in which case the core protocol may be implemented on the unallocated cake again. The main thing we will prove is that we do not need to implement the core protocol on the unallocated cake unbounded number of times. For this, we first show in the Core Protocol Lemma that the core protocol returns a partial envy-free allocation with the additional useful property that the cutter and one other agent get complete pieces.

			\begin{algorithm}[h!]
			  \caption{Core Protocol for 4 agents that returns a partial envy-free allocation}
			  \label{algo:core}
			\renewcommand{\algorithmicrequire}{\wordbox[l]{\textbf{Input}:}{\textbf{Output}:}} 
		
			 \renewcommand{\algorithmicensure}{\wordbox[l]{\textbf{Output}:}{\textbf{Output}:}}
		\begin{algorithmic}
				\REQUIRE Specified cutter agent (say agent 4)
		\ENSURE Partial envy-free allocation.
			\end{algorithmic}
			\begin{minipage}{1\columnwidth}
			  \begin{algorithmic}[1] 
			  	\STATE  Agent 4 is asked to cut the cake into 4 equal value pieces.
				\STATE Agents $1,2,3$ are asked to give their values for the 4 pieces. 
				\IF{each agent in $\{1,2,3\}$ can be given a most preferred piece}
				\STATE Allocate each agent in $\{1,2,3\}$ a most preferred (complete) piece and the remaining complete piece to the cutter (agent 4). 
				\RETURN the envy-free allocation.
				\ENDIF
				\STATE Agents $1,2,3$ are asked to trim the left hand side of their most and second most preferred pieces to make them (the right side of the trim) equally valuable as their third most preferred cake piece. 
												 \IF{no piece has exactly one trim}

				\IF{we are in a case $ij|jk|ik$ where $\{i,j,k\}=\{1,2,3\}$.}
												 \STATE  
												 Since we have already covered the case in which each agent can be given a complete most preferred piece, we end up in situation $i_1j_1|j_2k|i_2k|$. Without loss of generality, the case is $i_1j_1|j_2k_1|i_2k_2|$. In this case $i$ and $k$ are asked to trim their most preferred piece to their second most preferred piece whereas agent $j$ is asked to trim his two most preferred pieces to equal his third most preferred piece. Agent $i$ and $k$'s trims up to their third most preferred piece are ignored. The effective trims look as follows now: $ij|jk|||$. 
								 
									\IF{$j$ does not have the rightmost trim in both pieces}			 
				the right side of each piece with two trims is given to the agent who trimmed it the most. The piece is given up till the second rightmost trim.  The remaining agent picks his most preferred complete unallocated piece and then 4 get the remaining unallocated piece.
				\ELSIF{ $j$ trimmed both the pieces the most} 
				$j$ can choose which piece (up till the second rightmost trim) to get. The other piece with the trims is given to the agent with the second rightmost trim up till the second rightmost trim. The third non-cutter $i$ or $k$ gets his second most preferred piece completely. The last unallocated complete piece is given to agent 4. 
				\ENDIF
						 						\ENDIF
						\ELSIF{we are in a case $ijk|ijk|||$ where $\{i,j,k\}=\{1,2,3\}$.}
						\STATE Ask each agent in $\{1,2,3\}$ to trim this first and second most preferred piece from the left side to make the right hand side to the value of his third most preferred piece. Each agent is given the piece he trims the most up till the second rightmost trim. If the same agent has the rightmost trims for both pieces, he chooses which piece to get. The agent with second rightmost trim gets the other piece up till the second rightmost trim. The third non-cutter gets his third most piece completely. The last unallocated complete piece is given to agent 4. 
												\ENDIF
							\COMMENT{Continued on next page...}

							\algstore{myalg}
			 \end{algorithmic}
			 \end{minipage}
	\end{algorithm}

	\begin{lemma}[Core Protocol Lemma]
	For $n=4$, there exists a discrete and bounded protocol which returns partial envy-free allocation in which one agent cuts the cake into four equally preferred pieces and the cutter as well as at least one other agent gets one of these four pieces. 
		\end{lemma}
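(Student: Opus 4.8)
The plan is to verify that the Core Protocol (Algorithm~\ref{algo:core}) does what the lemma claims by a direct case analysis on the pattern of trims. Note first that the cutter, agent $4$, divides the cake into four equal-value pieces and, in every branch of the protocol, is awarded one complete piece by construction; hence the cutter clause is immediate and the cutter's own non-envy is trivial, since agent $4$ values each of the four pieces at exactly $1/4$ while every other allocated portion is a subset of one of those four pieces and so worth at most $1/4$ to him. The real content is therefore (a) envy-freeness among agents $1,2,3$ and (b) the guarantee that some \emph{non}-cutter also receives a complete piece.

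I would first isolate the single property of left-trimming that drives every case; call it the trim property. Suppose a piece $P$ carries several left-trims, each trimmer $a$ having cut $P$ down to a personal threshold value $t_a$, and suppose $P$ is handed, up to its second-rightmost trim, to the agent $a^\ast$ who trimmed it the most. Then $a^\ast$ values the awarded portion at least $t_{a^\ast}$, because that portion extends weakly left of $a^\ast$'s own cut; and every \emph{other} trimmer $b$ values the very same awarded portion at most $t_b$, because the second-rightmost cut lies weakly right of $b$'s cut. This one observation certifies non-envy locally: each recipient of a trimmed piece clears its own threshold, each rival competitor for that piece values the winner's share below its own threshold, and every agent is guaranteed its threshold value on whatever it finally receives, so no one envies the holder of a contested piece.

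Next I would enumerate the trim patterns. Each of the three non-cutters trims exactly its top two pieces, giving six trims on four pieces; discarding the trivial branch where all three can be handed distinct favourites, I split on whether some piece carries exactly one trim. A short check shows the realizable distributions of trims per piece are $(3,3,0,0)$, $(3,2,1,0)$, $(3,1,1,1)$, $(2,2,2,0)$, $(2,2,1,1)$, and that the only ones with \emph{no} singly-trimmed piece are $(3,3,0,0)$, i.e.\ $ijk|ijk$, and $(2,2,2,0)$, i.e.\ $ij|jk|ik$; the three remaining patterns each contain a lone trim and are dispatched by the continuation of the algorithm. In the two no-single-trim branches the protocol re-trims so as to collapse the situation to exactly \emph{two} contested pieces: in $ij|jk|ik$ agents $i,k$ re-cut their favourite down to their second favourite (their earlier third-favourite cuts being ignored), yielding effective pattern $ij|jk|||$, and in $ijk|ijk$ the re-trimming likewise leaves two contested pieces. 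I then apply the trim property to each contested piece, award the two untouched pieces to the remaining non-cutter (completely) and to the cutter, and confirm that the thresholds match at the boundaries. Counting recipients shows at most two non-cutters ever receive partial pieces, so a non-cutter always gets a complete piece, establishing clause~(b).

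The main obstacle I anticipate is not any single computation but the bookkeeping across branches where the thresholds shift. When $i$ and $k$ re-trim to their second favourite and their third-favourite cuts are discarded, I must re-verify that the discarded cuts create no hidden envy and that each agent's new guaranteed value (second- or third-favourite, depending on the branch and agent) dominates the value it assigns to each complete piece handed to someone else. The most delicate sub-case is the one forcing the rule ``let the double-rightmost trimmer choose, then give the other contested piece to its second-rightmost trimmer,'' where the threshold matching must be checked symbol by symbol. Confirming that the five listed distributions genuinely exhaust all realizable trim patterns is the other point needing care, and together these two tasks constitute the bulk of the verification.
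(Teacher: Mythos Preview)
Your proposal is correct and follows essentially the same route as the paper: isolate the second-rightmost-trim observation (your ``trim property''), then do a case analysis on the trim pattern, handling the two no-single-trim configurations $ijk|ijk$ and $ij|jk|ik$ by re-trimming to second favourite and dispatching the remaining patterns via the single-trim branches of Algorithm~\ref{algo:core}. Your enumeration by multiset $(3,3,0,0),(3,2,1,0),(3,1,1,1),(2,2,2,0),(2,2,1,1)$ is just a re-indexing of the paper's split by ``number of pieces with exactly one trim'' ($0,1,2,3$), and your anticipated obstacle---tracking the shifted thresholds after the re-trim and checking the double-rightmost-trimmer tie-break---is exactly where the paper's proof spends its words.
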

			\begin{proof}
	We argue that when agent 4 cuts the cake into equally preferred piece and then these pieces are partially allocated to the agents, then (1) the partial allocation is envy-free, and (2) the cutter and one other agent get complete pieces.
	If each agent in $\{1,2,3\}$ can be given a most preferred piece, then both conditions are trivially met. Otherwise, the algorithm distinguishes between the following cases:
	\begin{inparaenum}[(1)]
		\item no piece has exactly one trim;
		\item exactly one piece has exactly one trim;
		\item exactly two pieces have exactly one trim;
		\item exactly three pieces have exactly one trim.
	\end{inparaenum}
				In each of the cases, one non-cutter gets a complete piece and the cutter is also given an unallocated complete piece. 
			
				It remains to be shown that the partial allocation is envy-free. When an agent $i$ gets a (possibly partial) piece $a$, he was the one who trimmed that piece the most. For each other piece $b$ that is allocated, some other agent $j$ trimmed $b$ at least as much as agent $i$, i.e.,  $j$'s trim in $b$ was not left of $i$'s trim. Hence $i$ is not envious of $j$ if $j$ gets $b$ up till $j$'s trim from the right hand side. The reason is that $i$ thinks that $j$'s piece has more value than $i$'s allocation only if $j$ gets the right side of $b$ beyond the trim of agent $i$.
	Thus, if $i$ has the second rightmost trim in $b$, then $i$ is not envious of $j$ even if $j$ gets the right side of $b$ up till $i$'s trim (e.g., see Figure~\ref{fig:trim2}). We note that in each of the four cases, when an agent $i$ get a piece, he is not envious of another agent $j$ because of the reason above. Moreover, if in piece $a$, the second rightmost trim is strictly to the left of $i$'s trim in $a$ or if there is no other trim in $a$, then $i$ not only gets the right hand side of $a$ up to $i$'s trim but an additional bonus up till the second rightmost trim or the edge of the cake (whichever comes first). Hence, no agent $i$ is envious of another agent $j$.
	\end{proof}

	\begin{remark}
		 As soon as the agents' ordinal ranking of the four pieces cut by the cutter are known, it can be ascertained whether in the core protocol, an agent is guaranteed to get a piece of value equal to his third or second most preferred piece. Each agent gets a piece that is of same value as his third most preferred piece. 
		 An agent is guaranteed to get a second most preferred piece during the core protocol, if he is asked in the worst case to trim his most preferred piece to his second most preferred piece.
		\end{remark}

				\begin{figure}[h!]
					\centering
					\scalebox{0.75}{
				\begin{tikzpicture}[thick]
				\draw (0mm, 0mm) rectangle (80mm, 10mm);

				\draw[brown] (20mm, 15mm) -- (20mm, -5mm);
				\draw[brown] (45mm, 15mm) -- (45mm, -5mm);
				\draw[brown] (65mm, 15mm) -- (65mm, -5mm);
		
				\draw[blue] (5mm, 13mm) -- (5mm, -3mm);
				\node[blue] at (5mm, -5mm) {$i$};	
		
				\draw[red] (27mm, 13mm) -- (27mm, -3mm);
				\node[red] at (27mm, -5mm) {$j$};
		
				\draw[blue] (23mm, 13mm) -- (23mm, -3mm);
				\node[blue] at (23mm, -5mm) {$i$};
		
				\node[] at (10mm, 17mm) {$a$};
					\node[] at (32.5mm, 17mm) {$b$};

				[fill,red] (0mm, 5mm) rectangle (19.8mm, 10mm);
				[fill,blue] (0mm, 0mm) rectangle (19.8mm, 5mm);
	
				\end{tikzpicture}
				}
				\caption{Example of a scenario where $i$ has the rightmost trim for piece $a$ and second rightmost trim of piece $b$ whereas agent $j$ has the rightmost trim for piece $b$. If $i$ gets the right hand side of $a$ up till his trim, and $j$ gets the right hand side of $b$ till $i$'s trim in $b$, then $i$ is not envious of $j$'s allocation. }
				\label{fig:trim2}
				\end{figure}
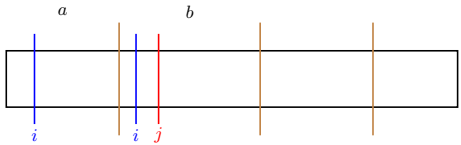

	\begin{algorithm}[h!]
					\renewcommand{\algorithmicrequire}{\wordbox[l]{\textbf{Input}:}{\textbf{Output}:}} 
		
					 \renewcommand{\algorithmicensure}{\wordbox[l]{\textbf{Output}:}{\textbf{Output}:}}
	\begin{minipage}{1\columnwidth}
					  \begin{algorithmic}[1]
						  \algrestore{myalg} 
					
												\IF{exactly one piece has exactly one trim}
												\COMMENT{The trims look like $i|ijk|jk|$}
												\IF{the agent who trimmed it views it as his most preferred}
												\STATE Give the complete piece to him.
												\ELSIF{the agent who trimmed it find it his second most preferred}
												\STATE ask him to trim his first most preferred to equal his second most preferred piece.
												\ENDIF
						\STATE	
						The pieces are allocated up to the second rightmost trim to the agents who trimmed them most. If 2 pieces were trimmed most by the same agent, he decides which to get and the other is given to the agent who trimmed that piece second most.
											     						 \STATE Give the last unallocated piece completely to the cutter (agent 4).
																		 \ENDIF
					
	 					\IF{exactly 2 pieces have exactly one trim }
						\COMMENT{The trims look like $jk|ik|i|j$}
	 					 \STATE Give those pieces with exactly one trim completely to the agents in $\{1,2,3\}$ who trimmed them if it is their most preferred.
	 					 \STATE Agents who trimmed a piece with a single trim but do not value that piece most are asked to re-trim their most preferred piece up to their second most preferred piece. Their trims to make them equal to their third most preferred ignored from now on.
	 					 \STATE The right hand side of the two pieces with the two trims are given to the agents with the rightmost trims. The pieces are allocated up till the second rightmost trim. If the 2 pieces were trimmed most by the same  agent (agent $k$), he choses which to get and the other piece is given to whoever trimmed it second most. 
						 \STATE If one non-cutter has not been allocated a piece, he gets the most preferred piece among the two unallocated complete pieces.
						 %
						  					 \STATE Give the last unallocated piece completely to the cutter (agent 4)
	 					\ENDIF.
					
	 						\IF{exactly 3 pieces have exactly one trim}
							\STATE The trims look like $1|2|3|123$.
	 						 \STATE If an agent most prefers the piece where he made a single trim, give him that complete piece.
	 						 \STATE The ones who most prefer the piece with the three trims (call it $a$) compete for it by trimming up to their second most preferred piece. 
							 Their trims to make them equal to their most preferred ignored from now on.

	 						 \STATE Cut $a$ at the second rightmost trim, and then allocate $a$ to whichever agent trimmed it most. The other agents get their second most preferred complete piece.
							 \STATE Give the cutter (agent 4) the remaining complete piece.
							 \ENDIF

									\RETURN envy-free allocation and any cake that is still not allocated.
					 \end{algorithmic}
					 \end{minipage}
					\end{algorithm}

	We make another observation about the outcome of the core protocol.
	
	\begin{lemma}\label{lemma:completepiece}
		During the core protocol, when an agent $i$ makes trims to equal his second or third most preferred piece respectively, $i$ either gets such a piece completely or some other agent gets such a piece completely that $i$ values as much as second or third most preferred piece respectively. 
		\end{lemma}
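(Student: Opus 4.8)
The plan is to prove the statement by the same four-way case split used in the Core Protocol Lemma---according to whether none, exactly one, exactly two, or exactly three of the four pieces carry a single trim---but to organise the argument around the \emph{level} to which agent $i$ trims. I would dispatch the ``third most preferred'' level by a single counting argument valid in every case, and reserve the genuine case analysis for the ``second most preferred'' level, which is where the difficulty sits.

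For the third most preferred level I would simply appeal to the Core Protocol Lemma, which guarantees that at least two of the four pieces---the cutter's piece together with at least one more---are handed out completely. Among the four pieces cut by the cutter, at most one (agent $i$'s least favoured) can be valued by $i$ below his third most preferred value, so at least one of the two-or-more complete pieces lies among $i$'s top three and is thus valued by $i$ at least as much as his third most preferred piece. If $i$ holds that complete piece the first disjunct is satisfied; otherwise another agent holds it and the second disjunct is satisfied.

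For the second most preferred level I would walk through exactly the configurations in which the protocol instructs $i$ to re-trim his most preferred piece up to his second most preferred value: the configuration $ij|jk|ik$ (where $i$ and $k$ both re-trim), the single-single-trim case $i|ijk|jk$, the two-single-trim case $jk|ik|i|j$, and the three-single-trim case $1|2|3|123$. In each, the outcome for $i$ is of one of two kinds. Either $i$ loses the competition for his most preferred piece, in which case the protocol awards him his second most preferred piece as a \emph{complete} piece and the first disjunct holds outright; or $i$ wins his most preferred piece as a trimmed, partial piece. In the latter situation the crucial observation is that $i$'s second most preferred piece becomes \emph{untrimmed} after the re-trims: $i$'s own trim in it is discarded the moment he re-trims his most preferred piece, and any surviving trim must belong to another agent who, in the configuration at hand, is himself re-trimming a different piece and so discards his trim there as well (in the three single-trim configurations the piece carries only $i$'s trim to begin with, making this immediate). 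An untrimmed piece is competed for by nobody, so it is allocated whole---to the remaining non-cutter or to the cutter---and it is valued by $i$ at exactly his second most preferred value, giving the second disjunct.

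The main obstacle is the bookkeeping in this last step: in each re-trim configuration one must verify both that the re-trim really does drop $i$'s trim in his second most preferred piece and that no surviving trim keeps that piece in contention. The configuration $ij|jk|ik$ is the instructive one, since under the normalisation $i_1j_1|j_2k_1|i_2k_2$ the piece labelled $i_2k_2$ is the common second choice of the two re-trimming agents and is seen to lose both its trims at once; the three single-trim case $1|2|3|123$ needs the additional remark that a winner's single-trim piece is neither the first nor the second choice of either other non-cutter, so that it is never claimed and falls through to the cutter as a complete piece.
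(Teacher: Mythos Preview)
Your proposal is correct and follows essentially the same approach as the paper: the counting argument for the third-most-preferred level is identical, and for the second-most-preferred level both proofs argue that when $i$ wins his top piece partially, $i$'s second-most preferred piece is left uncontested and hence allocated whole. Your case-by-case verification of that last point is more careful than the paper's one-line assertion (``only $i$ was competing for his second most preferred piece''), and in particular it correctly handles the $i_1j_1\mid j_2k_1\mid i_2k_2\mid$ configuration, where $k$ also had a trim on $i$'s second piece until $k$'s own re-trim discards it.
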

    	\begin{proof}
    		Assume agent $i$ is not the cutter in the core protocol and he was asked to trim his first and second most preferred pieces to equal his third most preferred piece. If agent $i$ gets the complete piece that is his third most preferred piece, we are already done. Let us say that he got a partial piece. Then, at most one other agent got a partial piece. This means that some other agent $j$ got a complete piece that is either agent $i$ first, second or third most preferred piece.

    		Assume agent $i$ is not the cutter in the core protocol and he was asked to trim his most preferred pieces to equal his second most preferred piece. This means that $i$ got either his most preferred piece up to the value of the second most preferred piece or he got the second most preferred piece completely. If he got the second most preferred piece completely, we are done. If $i$ got the most preferred piece up to the value of the second most preferred piece, then some other agent got a possibly partial piece from $i$'s most preferred piece. But note that since $i$ was asked to trim up to his second most piece in the protocol, only $i$ was competing for his second most preferred piece. Hence, some other agent $j$ got $i$'s second most preferred piece completely. 
    		\end{proof}
        
%
%

			The core protocol for partial envy-free allocation can be extended to obtain protocol for partial envy-free allocation with a single domination.
	If the core protocol is run again on the unallocated cake, we show in Lemma~\ref{lemma:single-domination-protocol} that the cutter dominates at least one agent. 
	When the core protocol is implemented, then the
			piece from which the highest valued (from the perspective of the cutter) residue is trimmed is called a \emph{significant piece}.
	 We will show that the cutter can be made to dominate an agent who got the significant piece. If residues from both pieces that are partially allocated are of the same value to the cutter, then we say that both non-cutters who got partial pieces were given significant pieces. In this case, a second run of the core protocol is enough for the cutter to dominate two agents.


 
	 \begin{lemma}[Single-Domination Protocol Lemma]\label{lemma:single-domination-protocol}
	For $n=4$, there exists a discrete and bounded protocol which returns an envy-free partial allocation in which one agent dominates another agent.
		 \end{lemma}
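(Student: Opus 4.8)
The plan is to run the Core Protocol once and, only if necessary, one additional time, while tracking the value that the \emph{cutter} (agent $4$) assigns to every piece. Normalise so that the cutter values the cake currently being divided at $V$; since he cuts it into four equally preferred pieces, each of those pieces is worth exactly $V/4$ to him. By the Core Protocol Lemma the cutter and at least one other agent receive complete pieces, so at most two non-cutters receive partial pieces. For each partially allocated piece, call the part trimmed off its \emph{residue}, measured by the cutter's valuation, and call the piece with the largest residue, of value $r$, the \emph{significant piece}. If no piece is partial the allocation is already complete. If exactly one piece is partial, then the total residue equals $r$, and handing it all to the agent $i$ who received the significant piece raises his value back to $(V/4-r)+r = V/4$, so the cutter already dominates $i$ and we are done.

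The remaining case is when two pieces are partial, with residues $r \ge r'$ and total residue value $R = r + r' \le 2r$ to the cutter. Here I would invoke the Core Protocol a second time on the residue. The cutter again cuts it into four equal pieces, now each worth $R/4$ to him, keeps one, and produces a fresh partial envy-free allocation. Two numeric facts drive the argument. First, whatever agent $i$ (the significant-piece agent of the first run) receives in the second run lies inside a single one of these four pieces, hence is worth at most $R/4$ to the cutter. Second, the leftover residue $R_2$ after the second run is a union of at most two trimmed parts, each contained in a piece of cutter-value $R/4$, so $R_2 \le R/2$.

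To conclude domination I would compare totals. After both runs the cutter has accumulated value $V/4 + R/4$, while agent $i$ has $(V/4 - r)$ from the first run plus at most $R/4$ from the second. Even if the entire final residue $R_2$ is awarded to $i$, his value is at most $(V/4 - r) + R/4 + R_2$, and
\[
(V/4 - r) + R/4 + R_2 \;\le\; V/4 - r + R/4 + R/2 \;=\; V/4 + (3R/4 - r) \;\le\; V/4 + R/4,
\]
where the final inequality uses $r \ge R/2$. Thus the cutter's value is at least $i$'s, so the cutter dominates $i$. Envy-freeness of the combined partial allocation follows from additivity of the valuations: each run is individually envy-free, so no agent envies another's union of two pieces. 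Boundedness is immediate, since the Core Protocol is bounded and is invoked at most twice.

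The step I expect to be the main obstacle is nailing down the two estimates that make the displayed inequality go through: that the significant residue satisfies $r \ge R/2$ (immediate from $R \le 2r$, i.e.\ from $r$ being the larger of the two residues), and, more delicately, that a second application of the Core Protocol shrinks the residue to at most $R/2$. The latter rests on the guarantee from the Core Protocol Lemma that at most two pieces remain partial in each run, combined with the observation that every residue is confined to a single quarter-value piece; I would state this carefully, since it is exactly what forces a single extra run, rather than an unbounded number, to suffice. As a bonus, when $r = r'$ both partial agents have deficit $r$, so the identical inequality shows the cutter dominates \emph{both} of them, which is precisely the observation recorded just before the lemma.
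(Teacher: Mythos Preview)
Your proposal is correct and follows essentially the same route as the paper's proof: run the Core Protocol twice with the same cutter, use that the significant residue $r$ is at least half the total residue $R$, and use that a second run leaves at most $R/2 \le r$ unallocated, so the cutter's advantage of $r$ over the significant-piece agent cannot be erased. The only difference is cosmetic: the paper phrases the bookkeeping in terms of the cutter's ``advantage'' $V_4(\beta_1)$ and leaves implicit (via envy-freeness of the second run) that this advantage does not shrink, whereas you track the cutter's and $i$'s absolute values explicitly and bound $i$'s second-run gain by $R/4$ directly; both lead to the same inequality.
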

		 \begin{proof}
	We run the core protocol a first time. This guarantees that the cutter (say agent 4) gets a complete piece (of value 1/4 of the whole cake) and that the residue is composed from the trims of at most 2 pieces. 
	Since from the cutter's perspective all pieces were equal, the residue cannot sum up to more than 1/2 of the cake for him. Recall that the piece which agent 4 thinks was trimmed is the significant piece. The total residue is composed of the residue from the significant piece (call the residue $\beta_1$) and the residue from the other trimmed piece (let us call this residue $\beta_2$). 
	The cutter thinks that he got $V_4(\beta_1)$ more value than the agent who got the significant piece. 
	Since $V_4(\beta_1)\geq V_4(\beta_2)$, the value of the total residue from the cutter's perspective is at most $2V_4(\beta_1)$. If we run the core protocol again, at most two pieces are partial and hence the residue's value for the cutter is at most $2\times 2V_4(\beta_1)/4=V_4(\beta_1)$. This implies that even if the agent who got the significant piece gets all the residue which is of value $V_4(\beta_1)$ to the cutter, the cutter would still not envy him. 
	This implies the cutter dominates the agent who got the significant piece.
			 \end{proof}
		
			Although the core protocol can be easily used to enable the cutter to dominate one agent, dominating two agents is more challenging. In the next section, we show how to overcome this challenge.

			\subsection{Permutation Protocol}
			If we run the core protocol repeatedly on the remaining unallocated cake, it may be that the cutter keeps dominating the same agent. We show that we only need to run the core protocol 5 times in total to achieve double domination. It may be that each time, the core protocol is run, the same agent gets the significant piece and hence the cutter dominates the same agent. 
	If a different agent gets a significant part of the residue in any of the iterations, then the double domination is already achieved with one more iteration since from the cutter's perspective we have 2 agents who may be given all that is left of the cake without him being envious of them. 
	If not, then we have one agent who ends up with the piece from which a significant trim was obtained for all iterations. The permutation lemma tells us that it is possible to give one of the 4 significant pieces to another agent while still preserving envy-freeness. This ensures that agent 1 ends up dominating two agents.

					\begin{algorithm}[h!]
					  \caption{Permutation Protocol for 4 Agents}
					  \label{algo:permutation}
					\renewcommand{\algorithmicrequire}{\wordbox[l]{\textbf{Input}:}{\textbf{Output}:}}
					 \renewcommand{\algorithmicensure}{\wordbox[l]{\textbf{Output}:}{\textbf{Output}:}}
					\begin{algorithmic}
						\REQUIRE An outcome of a core protocol in which one agent (say agent 4) is the cutter and another specified agent (say agent 1) gets the significant piece.
						\ENSURE An allocation in which each agent gets a piece equal to the value he trimmed to in the core protocol and in which agent 2 or 3 gets the significant piece.
					\end{algorithmic}
					  \begin{algorithmic}[1]
						  \IF{agent 2 was competing with someone for the piece and therefore had a trimmed piece}
						\begin{enumerate}
							\item If the agent who made the second rightmost trim on agent 2's piece is agent 1, then we can simply permute agents 1 and 2 i.e., exchange their pieces.
						\item 	If the agent who made the second rightmost trim on agent 2's piece is agent 3, then we can move 3 to agent 2's piece. 
						 Agent 2 can be given 1's (trimmed) piece.  Agent 1 can be given one of the complete pieces (which was given to 3 or 4). Agent 4 can be given the remaining complete piece.
						\end{enumerate}
						  \ELSIF{agent 2 was in possession of a complete piece for which he was not competing with another agent}

				  		\begin{enumerate}
				  		    \item If 2's piece is the piece such that agent 1 trimmed up to that value in the core protocol, then simply permute 1 and 2.
				  			\item If 4 is holding the piece such that agent 1 trimmed up to that value in the core protocol then we simply move 4 to 2's piece since it is a complete piece and agent 1 gets 4's piece. Agent 2 is given 1's piece.
							\item If  3 has a complete piece such that agent 1 trimmed up to that value in the core protocol and 3 is indifferent between two pieces among his top 3 pieces, then 3 can be given another complete piece (such that he trimmed up to the value of that piece) of either 2 and 4. Agent 1 can be given 3's piece. Agent 2 gets 1's piece and 4 gets the remaining complete piece.
				  		\end{enumerate}
						  \ENDIF
					 \end{algorithmic}
					\end{algorithm}

	We will use this idea in the argument of the Permutation Lemma. Before presenting the Permutation Lemma, we present another lemma that is useful for the proof of the Permutation Lemma.

				\begin{lemma}
					\label{lemma:wonkicow}
	Consider $m$ rows each with $m-1$ entries of positive reals. Then there exists at least one row such that for each entry in the row, the sum of other entries in the column corresponding to that entry is greater than or equal to the entry in the row.
				\end{lemma}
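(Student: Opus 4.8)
The plan is to reduce the statement to a pigeonhole argument. First I would arrange the data as an $m \times (m-1)$ array with positive entries $a_{i,c}$, where $i \in \{1,\dots,m\}$ ranges over rows and $c \in \{1,\dots,m-1\}$ over columns, and write $C_c = \sum_{k=1}^{m} a_{k,c}$ for the total of column $c$. A row $i$ witnesses the lemma precisely when, for every column $c$, the sum of the other entries of that column is at least $a_{i,c}$; since $\sum_{k \neq i} a_{k,c} = C_c - a_{i,c}$, this is equivalent to the clean condition $a_{i,c} \le \tfrac12 C_c$. So the goal becomes: find a row whose entry is at most half the column total in every single column.

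Next I would isolate the obstruction. Call row $i$ \emph{heavy} in column $c$ if $a_{i,c} > \tfrac12 C_c$, so a row witnesses the lemma exactly when it is heavy in no column. The crucial observation is that each column has at most one heavy row: if two distinct rows $i,j$ were both heavy in column $c$, then $a_{i,c} + a_{j,c} > C_c$, contradicting $C_c \ge a_{i,c} + a_{j,c}$, which holds because the remaining entries are positive.

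To finish, I would argue by contradiction. Assuming no row witnesses the lemma means every row is heavy in at least one column; choosing one such column $f(i)$ for each row defines a map $f\colon \{1,\dots,m\} \to \{1,\dots,m-1\}$. The at-most-one-heavy-row observation forces $f$ to be injective, but an injection from an $m$-element set into an $(m-1)$-element set cannot exist. This contradiction yields a row that is heavy in no column, which is the desired witness.

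I expect this lemma to be genuinely easy, with no real obstacle beyond the two small observations above: the algebraic reformulation ``others' sum $\ge$ own entry $\iff$ own entry $\le$ half the column total,'' and the fact that a single column admits at most one entry exceeding half its total. Together these turn the assumed simultaneous failure across all $m$ rows into an injection into a strictly smaller set, which pigeonhole rules out.
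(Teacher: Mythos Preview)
Your proof is correct and follows essentially the same pigeonhole argument as the paper: each column can disqualify at most one row, and with only $m-1$ columns some row among the $m$ must survive. The only cosmetic difference is that the paper disqualifies the row holding the \emph{unique column maximum} (a sufficient condition), whereas you disqualify the row whose entry exceeds half the column total (the exact equivalent reformulation); the counting step is identical.
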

				\begin{proof}
	It is sufficient to find a row in which each entry is not the unique maximum entry for that column. We go row by row and eliminate a row if it has at least one entry that is a maximal value among all entries in the corresponding column. Even if $m-1$ rows are eliminated, we are left with one row in which each entry is not the unique maximum entry for that column.  
					\end{proof}

				We will rely on Lemma~\ref{lemma:wonkicow} while reasoning about when reallocating pieces does not cause any envy. In particular let us say that an agent gets slightly more than the value he wanted to guarantee. He may not want to let go of this extra value lest it leads to him being envious. However, let us say he gets similar extra values again, then we may ask the agent to choose which one of the extra values he rates least and give this extra value to some other agent. The other extra values, make up for this loss. Intuitively, Lemma~\ref{lemma:wonkicow} will help identify that if we have enough subcases (rows) then there will be one row on which an agent will be happy to compromise.

	We are now in a position to present the Permutation Lemma.

		\begin{lemma}[Permutation Lemma]\label{lemma:perm}
		There exists a discrete and bounded protocol for 4 agents that returns a partial envy-free allocation in which one agent dominates two other agents.
			\end{lemma}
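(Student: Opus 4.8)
The plan is to fix a single cutter, say agent $4$, and repeatedly invoke the core protocol on the unallocated residue, piggybacking on the accounting already established in the proof of \lemref{lemma:single-domination-protocol}. Each call of the core protocol is bounded (by the Core Protocol Lemma), hands agent $4$ a complete quarter-piece, and leaves a residue built from the trims of at most two pieces. Since all four pieces are equal in the cutter's eyes, the residue value to agent $4$ at least halves with every call, so after a constant number of rounds the leftover is negligible relative to the significant residue $\beta$ that agent $4$ gained over the holder of a round's significant piece. Concretely, running the core protocol once more than in \lemref{lemma:single-domination-protocol} already makes agent $4$ dominate the holder of each round's significant piece, once a later round has shrunk the leftover below that round's $V_4(\beta)$.

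From here I would split on how the significant pieces are distributed across the rounds. If two distinct non-cutters ever receive significant pieces, then after one additional round the leftover is small enough that handing all of it to either of them leaves agent $4$ non-envious; hence agent $4$ dominates two agents and we are done. The hard case --- and the reason the core protocol must be iterated a bounded number (five) of times --- is when one fixed agent, say agent $1$, receives the significant piece in \emph{every} round. Then agent $1$ accumulates several significant pieces, and I would apply the Permutation Protocol (\algref{algo:permutation}) to reassign exactly one of them to agent $2$ or agent $3$. That agent then holds a significant piece, so agent $4$ dominates him, while agent $4$ still dominates agent $1$ through the significant residues retained from the other rounds; this yields double domination for agent $4$.

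The main obstacle is showing that such a reassignment can always be performed without destroying envy-freeness, and this is exactly where \lemref{lemma:wonkicow} enters. The Permutation Protocol guarantees a clean exchange only when the agent who gives up a piece can absorb the loss, i.e.\ when the bonus he forfeits is compensated by bonuses he retains elsewhere. Viewing the several rounds as the $m$ rows of \lemref{lemma:wonkicow} and the competing agents' bonuses as the $m-1$ column entries, the lemma furnishes at least one round whose significant piece is safe to reassign: for that row every forfeited bonus is dominated by the sum over its column, so no affected agent becomes envious. Iterating the core protocol five times supplies enough rows for this counting to close.

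Finally I would discharge two housekeeping points. First, that after the permutation every agent still receives a piece worth at least the value he trimmed to in the core protocol: this is precisely the output guarantee of \algref{algo:permutation}, underpinned by \lemref{lemma:completepiece}, which certifies that whenever an agent trims toward his second or third most preferred piece, he or a substitute holder ends up with a complete piece of that value. Second, that the whole construction is discrete and bounded, being a bounded number of bounded core-protocol calls followed by a single finite reallocation, so the query count is bounded independently of the valuations. The delicate part throughout is the envy bookkeeping of the permutation, handled by \lemref{lemma:wonkicow}; the geometric shrinkage of the residue and the boundedness are comparatively routine.
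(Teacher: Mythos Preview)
Your proposal is correct and follows essentially the same route as the paper: run the core protocol with a fixed cutter, split on whether two distinct non-cutters ever receive significant pieces, and in the hard case invoke the Permutation Protocol together with \lemref{lemma:wonkicow} (applied to the bonus table) to pick a safe round to reassign, with \lemref{lemma:completepiece} backing the value guarantee. One small imprecision: the five iterations are not needed to ``supply enough rows'' for \lemref{lemma:wonkicow}---that lemma is applied with $m=4$ rows (the first four calls) and $m-1=3$ non-cutter columns; the fifth call is the extra shrinkage step that turns the two significant-piece holders into agents the cutter actually dominates, exactly as in the paper's final paragraph.
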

					\begin{proof}
						Assume that agents 1, 2, 3, 4 get pieces $p_1$, $p_2$, $p_3$, and $p_4$ respectively with $4$ getting complete pieces. When $4$ cut the pieces, $p_1$ is a piece $P_1$ without the part left of the second rightmost trim. 
						Now, assume that in four iterations of the $1$ gets the significant piece and we want to reallocate so that some other agent among $2$ and $3$ gets the significant piece. We need to show that when the reallocation is done then barring the bonus part that agents get in the core protocol, the agents get at least as preferred a piece.
						If another agent aside from agent 1 gets the significant piece then we are done. If agent 1 is repeatedly getting it, then we zoom in to a case to permute it i.e., reallocate some of the pieces so as to make sure that an agent other than 1 is dominated. If agent $1$ is repeatedly getting the significant piece and there is no reallocation in which instead of agent $1$ some other agent gets the significant piece, this means that either (a) agent $1$ likes the bonus from his significant piece so much that he is not willing to take some other piece or (b) for some other agent $j$, the bonus corresponding to the significant piece is not enough for $j$ to be attracted towards the significant piece. 
					
						\begin{claim}\label{claim:permutation}
							For a partial allocation as a result of the core protocol in which agents make trims, a reallocation can be done in which some agent other than $1$ gets the significant piece and each agent gets a piece of value corresponding to his original trims but in which he may lose out on the additional bonus due to the second rightmost trims. 
							\end{claim}
                    			\begin{proof}
                    				The algorithm to perform the reallocation is stated as the Permutation Protocol~(Algorithm~\ref{algo:permutation}).
                    		Imagine that agent 1 is holding the significant piece. For the piece to be significant, another agent must have been competing with agent 1 for it. If no other agent was competing for the piece, then agent 1 would have got the whole piece and hence the piece would not be significant.

                    				 Let us say that the agent who competes for the same piece is agent 2. Since the piece was cut up to the second rightmost trim,
                    				agent 2 is not envious of any other agent if agent 2 gets the piece (while not getting his trim).

                    				We now distinguish between 2 possibilities. 
                    				We show how both cases can be handled. 

                    				\begin{enumerate}
                    					\item \textbf{Agent 2 was competing with someone for the piece and therefore had a trimmed piece.} We distinguish between two subcases.
                    					\begin{enumerate}
                    						\item If the agent who made the second rightmost trim on agent 2's piece is agent 1, then we can simply permute agents 1 and 2 i.e., exchange their pieces.
                    					\item 	If the agent who made the second rightmost trim on agent 2's piece is agent 3, then we can move 3 to agent 2's piece. Notice that 3 must have been allocated a complete piece since only 2 pieces may be trimmed. This means that 1 did not compete with 3 for 3's piece which implies that if 1 can get 4 or 3's piece he will get the piece he was guaranteed before the order of the trim was determined (either second most preferred or third most preferred). Agent 2 can be given 1's (trimmed) piece.  Agent 1 can be given one of the complete pieces (which was given to 3 or 4). Agent 4 can be given the remaining complete piece. 
                    					\end{enumerate}

                    					\item \textbf{Agent 2 was in possession of a complete piece for which he was not competing with another agent.} 
                    				Let us focus on the piece agent 1 desires and was guaranteed to get a piece of that value (either his second most preferred or third most preferred). 
                    		 We will use the fact that by Lemma~\ref{lemma:completepiece}, some other agent got a complete piece that was of at least as much value to agent 1.
                    		 We distinguish between three subcases.

                    				\begin{enumerate}
                    				    \item If 2's piece is the piece agent 1 desires, then simply permute 1 and 2.
                    					\item If 4 is holding the piece to be freed then we simply move 4 to 2's piece since it is a complete piece and agent 1 gets 4's piece. Agent 2 is given 1's piece. 
                    					\item Let us assume that 3 is holding the piece to be freed and agent 1 is guaranteed to get a piece of a value as much as
                    				this complete piece (which we refer to as $a$). 
		
		

                    		\textbf{If 3 is indifferent among any two of his most preferred three pieces}, then either he is indifferent among the top 2 or among the second and third. For the former, if he got a top 2 piece, he will be willing to get another top 2 piece and if he got a third piece, he would certainly be happy to get one of the complete top 2 pieces. For the latter, if he got one of the second or third preferred pieces, he would be happy to get the other equally preferred one. If he had got the top piece, then this means that only 3 had a proper trim on $a$. But this means that $3$ is willing to move to another complete piece with as much value barring the extra bonus he got in piece $a$. Hence in all the cases above, 3 can be given another complete piece, agent 1 can be given 3's complete piece, and 2 can get 1's partial piece. \\

                    		 \textbf{We now assume that 3 is not indifferent among any of his 3 most preferred pieces}.  We show that in this scenario, we would already be in one of the previous cases in which 1 is okay with taking 2 or 4's piece.
		
		
		
                    				We first argue that agent $3$'s most preferred piece is the significant piece.
                    				Piece $a$ cannot be agent 3's most preferred piece since otherwise $1$ would not be guaranteed to get it: if $1$ got it, 3 would be envious. 
                    				Neither can the piece held by 4 or 2 be 3's most preferred piece since envy-freeness of the partial allocation will be violated. 
                    				Since the pieces allocated to 2, 3 and 4 are not $3$'s most valued pieces, it implies that
                    				agent $3$'s most preferred piece is the significant piece.

                    		Next, we argue that 3 was allocated his second most preferred piece in which had put a proper trim. First we assume that 3 trimmed up to his third most preferred piece. If $a$ is 3's third most preferred piece, then $3$ would have been envious of either 2 or 4 whoever got 3's most preferred piece. 3 would have competed for such as piece and such a piece would not have been allocated completely. This means that 3 got his second most preferred piece. Second, we assume that 3 trimmed unto his second most preferred piece, then this means 3 was allocated his second most preferred piece since he was guaranteed it. In both cases 3 put a proper trim on $a$.

                    		We now argue that $a$ (3's piece) is also agent 1's second most preferred piece. If the piece allocated to 2 or 4 was agent 1's second most preferred piece, we would be in a different case. 

                    		Therefore $a$ is agent 1's and 3's second most preferred piece. 
                    		Since $1$ is guaranteed to get a piece of value of $a$ and since $a$ is agent $1$'s second most preferred piece, it means that only $1$ has a proper trim on $a$. But we have already shown that $3$ also put a proper trim on $a$. But if two agents put a proper trim on a piece, the piece cannot be completely allocated to an agent hence a contradiction. 
                    	\end{enumerate}
                    				\end{enumerate}
                    				This complete the proof of the claim. 
                    						\end{proof}

			We have shown that reallocation can be done in which each agent gets a piece of value corresponding the trims the agents made. 
	
		Note that in each iteration of protocol, each agent gets some (possibly non-zero) bonus. Since agent $1$ gets a significant piece, he gets a non-zero bonus each time. Although $1$ may object to any reallocation for a particular iteration of core protocol, we show that he will not envious if one particular iteration of the core protocol is identified in which corresponds to the row of entries in Table~\ref{table:bonus} in which for each entry in the row, the sum of other entries in the column corresponding to that entry is greater than or equal to the entry in the row. By Lemma~\ref{lemma:wonkicow}, such a row exists. This means that even if each agent loses his bonus because of the permutation, he gets enough bonus in the other iterations to make for this loss so that there is no envy.
	
		We have shown so far that in four iterations of the core protocol, the partial envy-free allocation is such that two different agents got a significant piece in one of the calls of the core protocol. If the same agents get a significant piece in all the first four iterations of the core protocol, then we have shown above that the permutation protocol can implemented to give the significant piece to another agent and still not maintain envy-free across the four iterations of the core protocol.
	The fifth iteration of the core protocol ensures that the cutter dominates two agents and the partial allocation is envy-free.
			\qed \end{proof}
\vspace{-1em}
					\begin{table}[h!]
						\centering
					\scalebox{0.9}{
					\centering
				\label{tab:compare}
				\begin{tabular}{lcccc}
				\toprule
				&1&2&3&4\\
				\midrule
				Bonus in 1st iteration&$b_1^1$&$b_2^1$&$b_3^1$&0\\
				Bonus in 2nd iteration&$b_1^2$&$b_2^2$&$b_3^2$&0\\
				Bonus in 3rd iteration&$b_1^3$&$b_2^3$&$b_3^3$&0\\
				Bonus in 4th iteration&$b_1^4$&$b_2^4$&$b_3^4$&0\\
				\bottomrule
				\end{tabular}
			}
				\caption{Bonus of each agent in the first 4 calls of the core protocol with agent 4 as the cutter.}
				\label{table:bonus}
				\end{table}

	\subsection{Overall Protocol}
	In the previous sections, we built the building blocks for our overall protocol: Post Double Domination Protocol, Core Protocol, and Permutation Protocol. We are now in a position to formalize the protocol to compute a complete envy-free allocation and presents the main result. The protocol is formalized as Algorithm~\ref{algo:ef4agents}.

				\begin{algorithm}[h!]
				  \caption{Discrete and Bounded Envy-free Protocol for 4 Agents.}
				  \label{algo:ef4agents}
				\renewcommand{\algorithmicrequire}{\wordbox[l]{\textbf{Input}:}{\textbf{Output}:}} 
				 \renewcommand{\algorithmicensure}{\wordbox[l]{\textbf{Output}:}{\textbf{Output}:}}
				  \begin{algorithmic}[1] 
					  \WHILE{some agent $i$ does not dominate two other agents and there is still some unallocated cake}
					  \STATE Run the Core Protocol (Algorithm~\ref{algo:core}) 4 times on the unallocated cake with $i$ as the cutter. Return at any point if there is no cake left unallocated.
					  \COMMENT{After two iterations, $i$ already dominates one agent}
					 %
					 %
					 \IF{the same agent (say agent $j$) gets the significant piece in each of the 4 calls of the core protocol}
					 \STATE Identify in which call of the protocol, the non-cutter agents get less bonus than the sum of bonuses in the other calls of the core protocol. \COMMENT{see Table~\ref{table:bonus}}
					 \STATE Implement reallocation via the Permutation Protocol (Algorithm~\ref{algo:permutation}) for pieces allocated by this particular call of the core protocol where $i$ is the cutter and $j$ gets the significant piece. 
					 \ENDIF
					 \STATE Run core protocol in the unallocated cake if some cake is still unallocated.
					  \ENDWHILE
					  \IF{there is some unallocated cake}
					  \STATE Run Post Double Domination Protocol (Algorithm~\ref{algo:postDD}) on the remaining cake.
					  \ENDIF
					  		\RETURN envy-free complete allocation.		
				 \end{algorithmic}
				\end{algorithm}

	\begin{theorem}\label{th:finaltheorem}
		For four agents, there exists a discrete and bounded envy-free protocol that requires constant number of queries in the Robertson and Webb model. 
	\end{theorem}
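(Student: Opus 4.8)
The plan is to assemble the theorem as a direct composition of the Permutation Lemma (Lemma~\ref{lemma:perm}) and the Double Domination Lemma (Lemma~\ref{lemma:dd}), which between them already contain all the substantive work; the remaining task is only to check that the composition is legitimate and that the total query count is a constant, so that the Robertson and Webb bound holds.

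First I would observe that the Permutation Lemma supplies exactly the hypothesis demanded by the Double Domination Lemma. The protocol underlying the Permutation Lemma designates a cutter and, after at most five iterations of the Core Protocol (Algorithm~\ref{algo:core}) followed by a possible reallocation via the Permutation Protocol (Algorithm~\ref{algo:permutation}), returns a partial envy-free allocation in which that designated cutter dominates two other agents. Since the cutter is a free parameter, this is precisely a bounded protocol which, given a specified agent $i$ and an unallocated piece of cake, returns a partial envy-free allocation such that $i$ dominates two other agents. Feeding this into the Double Domination Lemma yields the desired complete four-agent envy-free protocol, realized concretely as the Overall Protocol (Algorithm~\ref{algo:ef4agents}).

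The one point that needs care is that domination is preserved as the protocol proceeds. Domination of $i$ over $j$ is defined relative to the current unallocated residue $\beta$: agent $i$ does not become envious of $j$ even if all of $\beta$ is given to $j$. Because each subsequent call only allocates a portion of $\beta$ and leaves a smaller residue $\beta'\subseteq\beta$, if $i$ dominates $j$ with respect to $\beta$ then $i$ still dominates $j$ with respect to $\beta'$ (a fortiori, since $j$ can now receive at most $\beta'$). Hence once the while loop of Algorithm~\ref{algo:ef4agents} has made an agent dominate two others, later iterations never destroy that property, and after at most four passes through the loop every agent dominates two others (or the cake is exhausted, in which case we are already done). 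At that stage the Post Double Domination Protocol (Algorithm~\ref{algo:postDD}), whose correctness is established inside the proof of the Double Domination Lemma, completes the allocation while preserving envy-freeness.

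Finally I would verify boundedness. Each invocation of the Core Protocol issues a constant number of {\sc Cut} and {\sc Evaluate} queries (one cutter cut into four pieces, a constant number of evaluations, and at most two trims per non-cutter), the Permutation Protocol performs pure reallocation and issues no new queries, and the Post Double Domination Protocol adds at most one further round of cutting plus a Divide-and-Choose. Since the while loop runs at most four times and each pass calls the Core Protocol at most five times, the grand total is bounded by a constant independent of the valuations. The only real ``obstacle'' is bookkeeping rather than mathematics: confirming the monotonicity of domination under shrinking residues and tallying the queries, since the genuine difficulties have all been discharged in the preceding lemmas.
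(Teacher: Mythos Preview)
Your proof is correct and follows essentially the same route as the paper: feed the Permutation Lemma into the Double Domination Lemma to obtain Algorithm~\ref{algo:ef4agents}, then observe that the total work is constant. Two minor differences: the paper goes on to compute explicit constants (at most 203 cuts and 584 queries), and in that count it charges a few evaluation queries to the permutation step (to read off the bonuses and select which iteration to permute), so your claim that the Permutation Protocol ``issues no new queries'' is not quite literally true---though this does not affect the constant bound; conversely, your explicit monotonicity argument that domination persists as the residue shrinks is a useful point the paper leaves implicit.
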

	\begin{proof}

		The protocol is formalized as Algorithm~\ref{algo:ef4agents}.
        The theorem follows from Lemmas~\ref{lemma:dd} and \ref{lemma:perm}. 
	The protocol first ensures that there exists a partial envy-free allocation in which each agent dominates two other agents. 
	The protocol then used
	the Double Domination Protocol to obtain a complete envy-free allocation. 
    
We now argue in more detail that the protocol is bounded not only in terms of number of cuts but also in terms of Robertson-Webb queries.
		
In each run of the core protocol, the cutter makes three cuts and then each non-cutter makes at most two trims. Two agents may be asked to replace their cuts and make a cut to make their most preferred piece equal to their second most preferred piece or in another case agents perform Divide and Choose which requires 1 cut.
	Hence, in the core protocol,  there are at most 3+6+1=11 cuts. Since we run 5 iterations of the core protocol to get one double domination, we need 50 cuts to get one double domination. Since we need each agent to double dominate, we need $50\times 4=200$ cuts. After we achieve double dominations for all agents, we need at most three more cuts. So we need at most 203 cuts.

	We now count the maximum number of Robertson-Webb query operations required during the protocol. The cutter is asked the value of the whole cake ($1$ query), and then made to cut to equal 1/4 of the value ($3$ queries). The non-cutter agents are asked to give values of the pieces ($3\times 4=12$ queries.) They are then asked to trim to equal the value of the third most preferred piece ($3\times 2$ queries). Two agents may be asked to make one additional cut to make their most preferred piece equal to their second most preferred piece ($2\times 1$ queries) or in another case, agents perform Divide and Choose which requires $4$ queries.
	Hence the core protocol requires in total $1+3+12+6+4=26$ queries.
	The core protocol is run 5 times so that takes $26\times 5$ queries.
	In addition to running the core protocol 5 times, we also need to run permutation protocol four times to check which permutation to implement. In each call of the permutation protocol, agents are queried about the amount of bonus which they get in the core protocol which takes additional 3 queries so in total of 12 queries for the permutation protocol.
	In order to get all double dominations, we require $((26\times 5) + 12)\times 4=568$ queries. After we achieve double dominations,  we require maximum 16 more queries. So the total number of queries is 584.
		\end{proof}

    \section{Discussion}
    \label{section:disc}

    In this paper, we proposed the first bounded and envy-free protocol for four agents. 
    Some of our insights  such as the one of exploiting the bonus cake given to the agents as well as analyzing the domination graph may be useful in attacking the problem for \emph{any} number of agents.
Our protocol is based on three main ingredients: core protocol, permutational protocol, and post double domination protocol. The higher level ideas of our overall protocol could be applied to general problem for more number of agents. 
Some of the proof ideas may be generalizable to more than four agents. For example, a suitable generalization of the core protocol is feasible. On the other hand, the Permutation Protocol appears to be challenging to extend to arbitrary number of agents and will require more interesting insights and techniques. There are also some other interesting directions for future research such as existence and properties of equilibria under our new protocol.

    

		\section{Acknowledgments}
	Data61 is funded by the Australian Government through the Department of Communications and the Australian Research Council through the ICT Centre of Excellence Program. Part of the research was carried out  when the authors were visiting LAMSADE, University Paris Dauphine  in May 2015. The authors thank Rediet Abebe, Steven Brams, Simina Br{\^{a}}nzei, Ioannis Caragiannis, Katar{\'{\i}}na Cechl{\'{a}}rov{\'{a}}, Serge Gaspers, David Kurokawa, and Ariel Procaccia for useful comments. 
They also thank the reviewers of STOC 2016 for suggestions to improve the presentation.   Haris Aziz thanks Ulle Endriss for introducing the subject to him at the COST-ADT Doctoral School on Computational Social Choice, Estoril, 2010.

\bibliographystyle{abbrv}
\balance

%

\end{document}